\newcommand{\captionsize}{\footnotesize}
\newtheorem{thm}{Theorem}
\newtheorem{proposition}{Proposition}
\newtheorem{lem}{Lemma}
\begin{document}
\title{Mending Missing Information in Big-Data}
%
%
%

\author{Hadassa~Daltrophe,
        Shlomi~Dolev
        and~Zvi~Lotker}

\maketitle



\begin{abstract}
Consider a high-dimensional data set, in which for every data-point there is incomplete information. Each object in the data set represents a real entity, which is described by a point in high-dimensional space. We model the lack of information for a given object as an affine subspace in $\mathbb{R}^d$ whose dimension $k$ is the number of missing features.

Our goal in this study is to find clusters of objects where the main problem is to cope with partial information and high dimension. Assuming the data set is separable, namely, its emergence from clusters that can be modeled as a set of disjoint ball in $\mathbb{R}^d$, we suggest a simple data clustering algorithm. Our suggested algorithm use the affine subspaces minimum distance and 
calculates pair-wise projection of the data achieving poly-logarithmic time complexity. 

We use probabilistic considerations to prove the algorithm's correctness. These probabilistic results are of independent interest, and can serve to better understand the geometry of high dimensional objects.
\end{abstract}


\section{Introduction}
One of the main challenges that arise while handling Big-Data is not only the large volume, but also the high-dimensions of the data. Moreover, part of the information at the different dimensions may be missing. Assuming that the true (unknown) data is $d$-dimensional points, we suggest representing the given data point (which may lack information at different dimensions) as a $k$-affine space embedded in the Euclidean $d$ dimensional space $\mathbb{R}^d$. Denote the affine-Grassmannian set of all $k$-affine spaces, embedded in the Euclidean $d$ dimensional space, as $A(d,k)$. This means that a point in our data set is a point in the affine-Grassmannian $A(d,k)$.    

A data object that is incomplete in one or more features corresponds to an affine subspace (called flat, for short) in $\mathbb{R}^d$, whose dimension is the number of missing features. This representation yields algebraic objects, which help us to better understand the data, as well as study its properties. A central property of the data is clustering. Clustering refers to the process of partitioning a set of objects into subsets, consisting of similar objects. Finding a good clustering is a challenging problem. Due to its wide range of applications, the clustering problem has been investigated for decades, and continues to be actively studied not only in theoretical computer science, but in other disciplines, such as statistics, data mining and machine learning. A motivation for cluster analysis of high-dimensional data, as well as an overview on some applications where high-dimensional data occurs, is given in~\cite{Kriegel2009}. 

Our underlying assumption is that the original data-points, the real entities, can be divided into different groups according to their distance in the $\mathbb{R}^d$. We assume that every group of points lie in the same $d$ dimensional ball $\mathbb{B}^d$ (a.k.a. a solid sphere), since the distance between a flat and a point (the center of the ball) is well-defined. The classic clustering problems, such as $k$-means or $k$-centers (see~\cite{Hopcroft2014} Chapter $8$), can be defined on a set of flats. The clustering problem when the data is $k$-flats, is to find the centers of the balls that minimizes the sum of the distance between the $k$-flats and the center of their groups, which is the nearest center among all centers.

However, Lee $\&$ Schulman~\cite{Lee2013} argues that the running time of an approximation algorithm, with any approximation ratio, cannot be polynomial in even one of $m$ (the number of clusters) and $k$ (the dimension of the flats), unless $P = NP$. We overcome this obstacle by approaching the problem differently. Using a probabilistic assumption based on the distribution of the data, we achieve a polynomial algorithm, which we use to identify the flats' groups. Moreover, the presented probability arguments can help us in better understanding the geometric distribution of high dimensional data objects, which is of major interest and importance in the scope of Big Data research.

\subsection*{Our contributions.}
We face the challenge of mending the missing information at different dimensions by representing the objects as affine subspaces. In particular, we work within the framework of flat in $\mathbb{R}^d$, where the missing features correspond to the (intrinsic) dimension of the flat. This representation is accurate and flexible, in the sense that it saves all the features of the origin data; it also allows for algebraic calculation over the objects. In this chapter, we study the pairwise distance between the flats, and based on our probabilistic and geometrical results, we developed a polylogarithmic algorithm that achieves clustering of the flats with high probability. 

The main result of the study is summarized in the following theorem, while the precise definition and the detailed proof are presented in the sequel.
\begin{thm} 
\label{thm:main}
Given the separable data set $\mathbf{P}$ of $n$ affine subspaces in $\mathbb{R}^d$, for any $\epsilon > 0$ and for sufficiently large $d$ (depending on $\epsilon$), with probability $1-\epsilon$, we can cluster $\mathbf{P}$ according to $\mathbb{B}^d$, using their pair-wise distance projection in $poly(n,k,d)$ time.
\end{thm}
\textit{Remarks:}
\begin{itemize}
	\item In addition to proving good performance for high dimensions as required in the scope of big-data, we also show that the algorithm works well for low dimensions.
	\item Using sampling, we achieve a \textit{poly-logarithmic} running time. 
	\item We show we can relax the model assumption about the identical size of clusters to any different sizes.
\end{itemize}
 
To enhance the readability of our text, Section~\ref{sec:pre} contains the basic notions, from convex and stochastic geometry, which are needed in the following. In particular, we recall the notion of flats and provide the model assumptions. We prove our main result in Section~\ref{sec:flatClustering}, and summarized the suggested Algorithm~\ref{algo:1} in Section~\ref{sec:algo}. We supplement our theoretical results with experimental data in Section~\ref{sec:exp}, and generalize our results to clusters with different size in Sections~\ref{sec:diffGroupSize}. In Section~\ref{sec:subAndDistribute} we illustrate how one can change our algorithm to work in sublinear time and to implement it in distributed fashion. Finally, in Section~\ref{sec:diss}, we discuss the geometric and algebraic representation, comparing our approach against others' proposals. 

\section{Preliminaries}
\label{sec:pre}
\subsection*{General notation.} Throughout the following, we work in $d$-dimensional Euclidean space $\mathbb{R}^d$,
$d \geq 2$, with scalar product $\left\langle \cdot,\cdot\right\rangle$ and norm $\left\|\cdot\right\|$. Hence, $\left\|x-y\right\|$ is the Euclidean distance of two points $x,y\in\mathbb{R}^d$, and $dist(X,Y):=\inf\left\{\left\|x-y\right\|:x\in X, y\in Y\right\}$ is the distance of two sets $X, Y \in\mathbb{R}^d$. We refer to any set $S\subseteq X$, which is closest to $Y$, i.e., satisfies $\left\|Y-S\right\| = dist(X,Y)$, as a \textit{projection} of $Y$ on $X$. In general, there can be more than one projection of $Y$ on $X$, i.e., several subsets in $Y$ closest to $X$. 

\subsection*{Grassmannians.} 
For $k\in\left\{1,...,d\right\}$, we denote by $G(d, k)$ and $A(d, k)$ the spaces of $k$-dimensional linear and affine subspaces of $\mathbb{R}^d$, respectively, both supplied with their natural topologies (see e.g.,~\cite{Schneider2008}). The elements of $A(d, k)$ are also called $k$-flats (for $k = 0$, points; for $k = 1$, lines; for $k = 2$, planes; and for $k = d-1$, hyperplanes). Recall that two subspaces $L\in G(d, k_1)$ and $M\in G(d, k_2)$ are said to be in \textit{general position} if the span of $L\cup M$ has dimension $k_1 + k_2$ if $k_1 + k_2 < d$ or if $L \cap M$ has dimension $k_1 + k_2 - d$ if $k_1 + k_2 \geq n$. We also say that two flats $E \in A(d, k_1)$ and $F \in A(d, k_2)$ are in general position, if this is the case for $L(E)$ and $L(F)$, where $L(E)$ is the linear subspace parallel to $E$.

\subsection*{Geometric and Probabilistic definitions.}

Let $\mathbf{P}=\left\{P_1,P_2,...,P_n\right\}$ be the set of $n$ random flats that we want to cluster. For the sake of simplicity, we consider the situation where all of them are of dimension $k$, where $k$ is taken to be the greatest dimension of any flat in $P$. Hence, every flat $P$ is represented by a set of $d-k$ linear equations, each with $d$ variables. Alternatively, we can represent any $k$-flat using a parametric notation, such that $P$ is given by a set of $d$ linear equations, each with $d-k$ variables.

When there is no flat with a fixed $i$th coordinate, we will call the $i$th coordinate \textit{trivial}. We can assume that \textit{no} coordinate is trivial, since otherwise, simply removing this coordinate from all flats will decrease $k$ and $d$ by $1$, while not affecting the clustering cost.

For $c\in \mathbb{R}^d$, let $\mathbb{B}^d_{c}$ be the unit ball of dimension $d$, centered at $c$ and $\mathbb{B}^d_{c_0}$ denote the unit ball centered at the origin. Two balls, $\mathbb{B}_{c_i}$ and $\mathbb{B}_{c_j}$, are $\Delta$-\textit{distinct} if $dist(c_i,c_j)\geq\Delta$. The ball $\mathbb{B}^d_{c}$ intersects the subset of flats $P=\left\{P_1,...,P_j\right\}$ if it intersects each flat in $P$. We will denote by $P^{c}_i\in\mathbf{P}$ a $k$-flat intersecting the unit ball $\mathbb{B}^d_{c}$ and by $P_i(r)\in\mathbf{P}$ a $k$-flat in $\mathbb{R}^d$ passing through the point $\left(r,0,...,0\right)$. 

Let $*$ be an equivalence relation such that for a point $u\in \mathbb{B}^d_c$, $u^*$ is the antipodal point of $u$ (i.e., $u$ and $u^*$ are opposite through the center $c$). For a $k$-flat $P^{c}$ intersecting the unit ball $\mathbb{B}^d_c$ in one point only (i.e., tangent to the balls surface), $P^{c*}$ denote its antipodal $k$-flat.

If $E$ and $F$ are in general position, there are unique points $x_E\in E$ and $x_F\in F$, so that $dist(E, F) = \left\|x_E-x_F\right\|$. We call the point $p=midpoint\left(E,F\right):=\left(x_E+x_F\right)/2$ the \textit{midpoint} of $E$ and $F$.

The probability, expectation and variance; will be denoted by the common notations $\operatorname{Pr}(\cdot),\operatorname{E}(\cdot)$ and $\operatorname{V}(\cdot)$ respectively. For a random variable $A$ dependent on $d$, we denote by $A\to_p c$ the ``converges in probability'', namely, $\forall\epsilon, \lim\limits_{d\to\infty}\operatorname{Pr}(\left\|A-c\right\|\leq \epsilon)=1$.

\subsection*{Model assumptions.}
Throughout the chapter we assume that the data is \textit{separable}, namely, satisfing the following assumptions:
\begin{itemize}
\item Two independent random flats $E,F\in A(d,k)$, with distribution $\mathbb{Q}$, are in general position with probability one.
\item $1\leq k\leq \left\lfloor d/2\right\rfloor$ which ensures that the flats do not intersect each other with probability one.
\item The (unknown) balls $\mathbb{B}^d_{c_1},...,\mathbb{B}^d_{c_m}$ are $\Delta$-\textit{distinct} with probability one.
\item The given flats set $\mathbf{P}$ is a superset of $m$ groups $\mathbf{P}=\left\{P_1,...,P_m\right\}$, such that every group $P_i\in\mathbf{P}$ contains $n/m$ flats that intersect the ball $\mathbb{B}^d_{c_i}$. Moreover, each flat $P\in P_i$ has a \textit{normally distributed} location and direction at the ball $\mathbb{B}^d_{c_i}$. We model this assumption by normally distributed coefficients. The parametric representation of a $k$-flat $P$ is: 
$$P=At+a= \left( \begin{array}{ccc}
\alpha_{0,1}+\alpha_{1,1}t_1+\alpha_{2,1}t_2+...+\alpha_{k,1}t_k \\
... \\
\alpha_{0,d}+\alpha_{1,d}t_1+\alpha_{2,d}t_2+...+\alpha_{k,d}t_k\\
\end{array} \right)
$$
where $\alpha_{ij}\sim N(\mu,\sigma)$ and $t$ is the $k$-dimensional vector. 
\end{itemize}
\section{$k$-flats Clustering}
\label{sec:flatClustering}

Given the set $\mathbf{P}$ of $n$ $k$-flats in $\mathbb{R}^d$, our goal is to cluster the flats according to the unknown set of balls, namely, to separate $\mathbf{P}$ into $m$ groups such that every group $P_i\in\mathbf{P}$ contains $n/m$ flats that intersect the same unit ball $\mathbb{B}^d_{c_i}$. We suggest the following procedure (summarized below in Algorithm~\ref{algo:1}) for the clustering process. The first step is to find the distance and the midpoint between every pair of flats in $\mathbf{P}$. Next, we filter the irrelevant midpoints using their corresponding distances such that midpoints with a distance greater than two are dropped and those with a distance $\leq 2$ are grouped together. In the final step we check which group contains $O(n/m)$ flats and output those groups. We argue that these simple steps provide the expected clustering procedure with high probability. In this section, we claim its correctedness using geometric and probabilistic arguments which appear in the following Propositions and Lemmas.

As mentioned above, we start our procedure by calculating the pair-wise projection of $\mathbf{P}$, namely, finding the distance and the midpoint between every pair in $\mathbf{P}$. Let $P_i=\left\{x\in\mathbb{R}^d:Ex=e\right\}$ and $P_j=\left\{y\in\mathbb{R}^d:Fy=f\right\}$ be a pair of $k$-flats in $\mathbf{P}$. Note that the matrices dimensions $Dim(E)=Dim(F)=\left(d-k\right)\times d$ since each flat $P\in \mathbf{P}$ is represented by $d-k$ equations with $d$ variables. The suggested algorithm calculates the minimum distance points (i.e., midpoint) between the pair using Euclidean norm minimization:
\begin{eqnarray}
minimize\:\left\|Ax-b\right\|
\end{eqnarray}
where
$
 A=\begin{pmatrix}
  E \\
  F \\
 \end{pmatrix},x=\left(x_1,...,x_d\right),
 b=\begin{pmatrix}
 e\\
f \\
\end{pmatrix}
$

Since the norm is always nonnegative, we can just as well solve the least squares problem 
\begin{eqnarray}
\label{eq:LS}
minimize \left\|Ax-b\right\|^2
\end{eqnarray}
The problems are clearly equivalent, while the objective in the first one is not differentiable at any $x$ with $Ax - b = 0$, whereas the objective in the second is differentiable for all $x$. 

\begin{proposition}
\label{pro:midpoint}
The least squares minimization (Eq.~\ref{eq:LS}) gives unique solution $p$ such that $p=midpoint(P_i,P_j)$. 
\end{proposition}

\begin{proof}
Using the equation $$minimize\:\left\|Ax-b\right\|^2=\left(Ax-b\right)^T\left(Ax-b\right)$$ this problem is simple enough to have a well known analytical solution - a point $p$ minimizes the function $f=x^TA^TAx-2x^TA^Tb+b^Tb$ if and only if
$$
\nabla f=2A^TAp-2A^Tb=0
$$
i.e., if and only if $p$ satisfies normal equations
$$
A^TAp=A^Tb
$$
which always will have a solution (note that the system is square or over-determined since $2(d-k)\geq d$ for $1\leq k\leq d/2$). The columns in $A$ are the different coordinates of the two flats, hence they are independent and have a unique solution:
$p=(A^TA)^{-1}A^Tb.$\qed\end{proof}

\begin{proposition}
\label{pro:dist}
Using the midpoint $p=midpoint(P_i,P_j)$ one can find the distance between the two flats $dist\left(P_i,P_j\right)$. 
\end{proposition}

\begin{proof}
Theorem $1$ in~\cite{Gross1996} calculates the Euclidean distance between the two affine subspaces using the matrices range and null space. Alternatively, since we already have the midpoint $p$ between the flats we can find the distance between them by projecting $p$ onto the flats and then calculating the distance between the projected points. This projection can be made by a least squares method with constraints, more precisely, to solve the following two optimization problems: $min\left\{\left\|p-x\right\|^2:Ex=e\right\}$ and $min\left\{\left\|p-x\right\|^2:Fx=f\right\}$ or any other efficient orthogonal projection method (e.g.~\cite{Plesnik2007}).
\qed\end{proof}

Having the midpoint and the distance between all the pairs, we filter the irrelevant midpoints using their corresponding distances as shown in the following Lemmas.
First we argue that the flats' pairwise projection helps to define the origin balls, namely, the midpoints that arise from the same ball are centered around that ball:
\begin{lem}
\label{lem:same}
Let $P=\left\{P^c_1,P^c_2,...,P^c_j\right\}\subseteq \mathbf{P}$ be a set of $k$-flats in $\mathbb{R}^d$ intersecting the ball $\mathbb{B}^d_c$. Let $\mathbf{p}=\left\{p_{12},p_{13},...p_{1j},...,p_{\left(j-1\right)j}\right\}$ be the set of the midpoints of all $\binom{j}{2}$ pairs of $P$. The mean of this set $\operatorname{E}[\,\mathbf{p}]$ equals to $c$ (the center of $\mathbb{B}^d_c$), and the variance $\operatorname{V}[\,\mathbf{p}]$ is bounded.
\end{lem}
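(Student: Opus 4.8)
\emph{Plan.} By translation invariance I may assume the ball is centered at the origin, $c=0$, so the two targets become $\operatorname{E}[\mathbf p]=0$ and $\operatorname{V}[\mathbf p]=O(1)$. Since the flats $P^c_1,\dots,P^c_j$ are i.i.d.\ with the conditional law $\mathbb Q$ of a $k$-flat meeting $\mathbb B^d_0$, every midpoint $p_{il}=midpoint(P^c_i,P^c_l)$ is identically distributed, so it suffices to analyze one generic $p=midpoint(E,F)$: the mean of $\mathbf p$ inherits $\operatorname{E}[p]$, and $\operatorname{V}[\mathbf p]$ is controlled by $\operatorname{V}[p]$ (the correlations between pairs sharing a flat affect only constants). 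Writing $E=\{a+Us\}$, $F=\{b+Vt\}$ with $d\times k$ direction matrices $U,V$ of i.i.d.\ $N(0,\sigma)$ entries, Proposition~\ref{pro:midpoint} yields the closed form
\[
p=\tfrac12(a+b)-\tfrac12\,W'(W^{T}W)^{-1}W^{T}(a-b),\qquad W=[\,U\mid -V\,],\ W'=[\,U\mid V\,].
\]

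For the mean I would use the antipodal symmetry built into the model. The point reflection $\rho:x\mapsto -x$ sends the flat $(a,U)$ to $(-a,U)$ and leaves the event $\{dist(0,\cdot)\le 1\}$ invariant; since the direction entries are centered Gaussian and the model places the in-ball location symmetrically about the center, the conditional law $\mathbb Q$ is $\rho$-invariant. As the closest-point pair of $\rho E,\rho F$ is the reflection of that of $E,F$, we get $midpoint(\rho E,\rho F)=\rho\, midpoint(E,F)$, hence $\operatorname{E}[p]=\operatorname{E}[\rho p]=-\operatorname{E}[p]$ and therefore $\operatorname{E}[p]=0=c$. This is exactly the role of the antipodal flat $P^{c*}$ introduced in the preliminaries.

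For the variance the first step is to remove the unbounded part of the offsets: the midpoint is parametrization-independent, so I may replace $a,b$ by points $a',b'\in\mathbb B^d_0$ lying on $E,F$ (they exist since the flats meet the ball), giving $\|a'\pm b'\|\le 2$. Normalizing the columns of $W$ to unit length — a scale-free operation that does not change $W'(W^{T}W)^{-1}W^{T}$ — produces a tall $d\times 2k$ matrix $\widetilde W$, and the operator-norm estimate $\|W'(W^{T}W)^{-1}W^{T}\|\le \kappa(\widetilde W)^2$ (with $\kappa=\sigma_{\max}/\sigma_{\min}$ the condition number) gives $\|p\|\le 1+\kappa(\widetilde W)^2$. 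Splitting $\operatorname{E}\|p\|^2$ over the well-conditioned event and its complement, on the good event $\|p\|$ is bounded by a constant, while on the rare near-degenerate event the bound $\|p\|^2\le(1+\kappa^2)^2$ reduces everything to the singular values of $\widetilde W$.

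The crux — and the step I expect to be the main obstacle — is bounding $\operatorname{E}[\kappa(\widetilde W)^4]$ uniformly in $d$. The numerator is easy: $\sigma_{\max}(\widetilde W)$ concentrates near a constant with Gaussian tails. The difficulty is the \emph{negative} moments of $\sigma_{\min}(\widetilde W)$, i.e.\ ruling out that the $2k$ random directions are nearly linearly dependent. This is precisely where $1\le k\le\lfloor d/2\rfloor$ and ``$d$ large'' enter: for a tall $d\times 2k$ Gaussian matrix the least singular value stays bounded away from $0$ with overwhelming probability, and $\operatorname{E}[\sigma_{\min}^{-4}]$ — hence $\operatorname{E}[\kappa^4]$, by Cauchy--Schwarz against the light-tailed $\sigma_{\max}$ — is finite and uniformly bounded once $d-2k$ exceeds a small constant. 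Combined with the high-dimensional fact that independent Gaussian directions are asymptotically orthonormal, this shows that $\|p\|$ stays bounded in probability and that its second moment remains bounded, yielding $\operatorname{V}[\mathbf p]=O(1)$.
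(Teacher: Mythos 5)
Your proof is correct in substance, and its two halves relate differently to the paper. For the \emph{mean}, you and the paper use the same idea: the antipodal (point-reflection) symmetry of the conditional law of flats meeting the ball forces $\operatorname{E}[p]=c$; you merely formalize what the paper states loosely, and you supply the integrability that makes the step ``$\operatorname{E}[p]=-\operatorname{E}[p]$, hence $\operatorname{E}[p]=0$'' legitimate, which the paper omits. For the \emph{variance}, your route is genuinely different. The paper reduces boundedness to two appendix claims (Propositions~\ref{pro:2Dintersection} and~\ref{pro:Dintersection}): a planar computation showing that two tangent lines meeting at angle $\phi$ intersect at distance $1/\sin(\phi/2)$ from the center, hence the midpoint is bounded \emph{with probability} $1-O(1)$, plus a qualitative spherical-coordinates argument for general $d$. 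You instead exploit the closed form from Proposition~\ref{pro:midpoint}, re-base the flats at in-ball points, and control $\left\|p\right\|$ by the condition number of the $d\times 2k$ direction matrix, bounding $\operatorname{E}[\kappa^4]$ via negative moments of the least singular value. This buys two real advantages: (i) it actually proves bounded \emph{variance}, whereas ``bounded with probability $1-O(1)$'' does not imply a finite second moment --- indeed, in the paper's own 2D example, $r=1/\sin(\phi/2)$ with $\phi$ uniform has $\operatorname{E}[r^2]=\infty$; and (ii) it makes explicit the hypothesis $d-2k$ larger than a small constant, exposing that in the boundary case $d=2k$ (allowed by the model assumption $k\leq\left\lfloor d/2\right\rfloor$) the second moment can genuinely diverge, so the lemma really needs $d$ sufficiently large. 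The paper's argument is more elementary and geometric, but as written it is only a sketch of boundedness in probability, not of bounded variance. One point you should still tighten: the directions $U,V$ are distributed according to the law \emph{conditioned} on the flats meeting the ball, so your singular-value moment bound for unconditioned Gaussian columns must be transferred through the conditioning (a density-ratio bound by $1/\operatorname{Pr}(\text{intersection})$ suffices, but it should be stated).
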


\begin{proof}
Let $P^c_i,P^c_j\in P$ be two flats intersecting the ball $\mathbb{B}^d_c$ where their distance midpoint is $p_{ij}$. Denote by $p^*_{ij}$ the \textit{antipodal} point of $p_{ij}$. Since the directions and the location of flats at $P$ are normally distributed around $c$ (see the model assumptions at Section~\ref{sec:pre}), we get the probability that $p_{ij}\in \mathbf{p}$ equals to the probability that $p^*_{ij}\in\mathbf{p}$, which implies that their expected value is $\operatorname{E}[\,\left\{p_{ij},p^*_{ij}\right\}]=c$. This geometric-probabilistic consideration holds to the whole set $\mathbf{p}$, hence, we get that $\operatorname{E}[\,\mathbf{p}]=c$.

For proving that the variance is bounded we argue in Proposition~\ref{pro:2Dintersection} and~\ref{pro:Dintersection} (appear at the end of this section) that for all $i,j$, the distance $r_{ij}$ between $p_{ij}$ and the center of the ball $c$ is bounded, which implies that $\operatorname{V}[\,\mathbf{p}]$ is bounded around $c$.    
\qed\end{proof}


At this point, for every pair of flats $(P_i,P_j)$ we have the corresponding midpoint and the distance $(p_{ij},d_{ij})$. We would like to show that if we eliminate all the midpoints $p_{ij}$ so that their distance $d_{ij}$ is greater than $2$, we are left with those that arise from the same cluster. The following Lemma argues that this is the case when $d$ is big enough:

\begin{lem}
\label{lem:diff}
Let $P_i,P_j\in \mathbf{P}$ be a pair of $k$-flats in $\mathbb{R}^d$. 
\begin{enumerate}
	\item If $P_i$ and $P_j$ intersecting the same ball $\mathbb{B}^d_c$ then the probability that the distance between them is less then $2$ is $P\left(dist(P_i,P_j)\leq 2\right)=1$.
	\item Otherwise, for any $\epsilon>0$, $\lim\limits_{d\to\infty}\operatorname{Pr}\left(dist(P_i,P_j)\geq 2\left(\Delta-\epsilon\right) \right)=1$.
\end{enumerate} 
\end{lem}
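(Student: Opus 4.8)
The first assertion is deterministic and needs no probabilistic input. Choose any $x_i\in P_i\cap\mathbb{B}^d_c$ and $x_j\in P_j\cap\mathbb{B}^d_c$, which exist because both flats meet the ball. Since $\mathbb{B}^d_c$ has radius $1$, the triangle inequality gives $dist(P_i,P_j)\le\|x_i-x_j\|\le\|x_i-c\|+\|c-x_j\|\le 1+1=2$. As this holds for every realization of the two flats, $\operatorname{Pr}(dist(P_i,P_j)\le 2)=1$, which is exactly part $(1)$.

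For part $(2)$ I would start from the exact expression for the distance between two flats in general position. Writing $P_i=a_i+L(P_i)$ and $P_j=a_j+L(P_j)$, minimizing $\|(a_i+u)-(a_j+v)\|$ over $u\in L(P_i)$, $v\in L(P_j)$ shows that $dist(P_i,P_j)=\|\Pi_{W^\perp}(a_i-a_j)\|$, where $W=L(P_i)+L(P_j)$ and $\Pi_{W^\perp}$ is orthogonal projection onto $W^\perp$. This value is independent of the chosen base points, so I may take $a_i\in P_i\cap\mathbb{B}^d_{c_i}$ and $a_j\in P_j\cap\mathbb{B}^d_{c_j}$; then $a_i-a_j$ differs from the \emph{fixed} vector $c_i-c_j$ (whose length is at least $\Delta$ by $\Delta$-distinctness) by a perturbation of norm at most $2$ coming from the two unit radii.

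The core of the argument is a high-dimensional concentration statement for this projection. Under the model assumptions the subspaces $L(P_i)$ and $L(P_j)$ are spanned by vectors with i.i.d.\ $N(\mu,\sigma)$ coordinates and are in general position, so $W$ behaves like a random $2k$-dimensional subspace while $c_i-c_j$ is fixed. The length removed by the projection satisfies $\|\Pi_W(a_i-a_j)\|^2\approx (2k/d)\,\|a_i-a_j\|^2$, which is negligible relative to $\|a_i-a_j\|^2$ for $1\le k\le\lfloor d/2\rfloor$ as $d\to\infty$; I would make this rigorous with a Gaussian/L\'evy concentration inequality, phrased through the $\to_p$ notation of Section~\ref{sec:pre} together with the radial bound of Proposition~\ref{pro:Dintersection}. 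This forces $dist(P_i,P_j)$ to concentrate around $\|c_i-c_j\|\ge\Delta$, giving, for every $\epsilon>0$, $\lim_{d\to\infty}\operatorname{Pr}(dist(P_i,P_j)\ge 2(\Delta-\epsilon))=1$.

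The step I expect to be the main obstacle is precisely this concentration, for two reasons. First, the nonzero mean $\mu$ biases every coordinate toward the all-ones direction, so $W$ is not exactly a uniformly random subspace; I would handle this by splitting each coefficient into its mean and a centered Gaussian part, treating the deterministic rank-one contribution separately and applying concentration only to the centered part. Second, $k$ may grow with $d$, so I must keep the error term $2k/d$ under control, verify that the in-ball perturbation of $a_i-a_j$ does not accumulate, and pin down the exact constant in front of $\Delta$; the bounded-variance estimate behind Lemma~\ref{lem:same} and Proposition~\ref{pro:Dintersection} is what I would invoke to close this gap.
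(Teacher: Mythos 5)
Your part (1) is correct and is essentially the paper's own argument: both flats meet the same unit ball, so the triangle inequality through the center bounds their distance by twice the radius. For part (2), however, you take a genuinely different route from the paper, and that route has two concrete gaps. The paper never writes the distance as a projection $dist(P_i,P_j)=\|\Pi_{W^\perp}(a_i-a_j)\|$; it deduces part (2) by applying Proposition~\ref{pro:diff} --- a ratio-convergence statement, $dist(R_j,Q_i)/dist(P_i,Q_i)\to_p\Delta$, proved from Lemma~\ref{lem:same} plus the weak law of large numbers and the scaling law $S_\Delta=S_1\Delta$ of Propositions~\ref{pro:diff_mean} and~\ref{pro:S1toSdelta} --- and then combines the limiting ratio $\Delta$ with the same-ball bound $dist(P_i,Q_i)\le 2$ from part (1); that combination is the paper's only source of the factor $2$ (one may question that step too, since $dist(P_i,Q_i)\le 2$ is an upper rather than a lower bound, but it is what the paper does).

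The first gap in your proposal is the concentration step. You claim $\|\Pi_W(a_i-a_j)\|^2\approx(2k/d)\,\|a_i-a_j\|^2$ is negligible ``for $1\le k\le\lfloor d/2\rfloor$ as $d\to\infty$.'' This is false exactly in the regime the model allows: $k$ may grow proportionally with $d$ (up to $\lfloor d/2\rfloor$; the paper's experiments use $k=d/3$). For $k=cd$ with constant $c$, the subspace $W=L(P_i)+L(P_j)$ has dimension about $2cd$, the projection onto $W^\perp$ retains only a $\sqrt{1-2c}$ fraction of the length of $a_i-a_j$, and your own argument then forces $dist(P_i,P_j)$ to concentrate near $\sqrt{1-2c}\,\|c_i-c_j\|$, which is close to $0$ when $k$ is near $d/2$ --- the opposite of what the lemma needs. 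So your strategy can at best prove the statement for $k=o(d)$, a strictly narrower regime than asserted; this is not a bookkeeping issue to ``keep under control,'' it is how the geometry actually behaves, and no Gaussian concentration inequality will remove it. The second gap is the constant. Even granting perfect concentration, every admissible choice of base points satisfies only $\|a_i-a_j\|\ge\|c_i-c_j\|-2\ge\Delta-2$, so your argument yields a lower bound of the form $(1-o(1))(\Delta-2)$, and in the most favorable reading something like $\Delta-\epsilon$; the claimed $2(\Delta-\epsilon)$ never follows from ``concentration around $\|c_i-c_j\|\ge\Delta$.'' Your closing sentence is a non sequitur: your argument contains no counterpart to the paper's step that multiplies the ratio limit $\Delta$ by the bound $2$, which is the only place the factor $2$ is produced.
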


\begin{proof}
When both flats are intersecting the same unit ball, the minimum distance between them is $\leq 2*radius\left(\mathbb{B}^d_c\right)=2$ which implies the first part of the lemma.
Applying Proposition~\ref{pro:diff} with $dist(P_i,Q_i)\leq 2$ (by the first part of the Lemma), we get that for any $\epsilon$ the distance between the two flats approach $2\left(\Delta-\epsilon\right).$
\qed\end{proof}

\begin{proposition}
\label{pro:diff}
Let $P_{i},Q_{i}$ and $R_j$ be flats intersecting the $\Delta-distinct$ balls $\mathbb{B}_{c_i}$ and $\mathbb{B}_{c_j}$ (respectively).
Then, for any $\epsilon>0$, $\lim\limits_{d\to\infty}\operatorname{Pr}\left(dist\left(R_j,Q_j\right)\geq\left(\Delta-\epsilon\right)dist\left(P_i,Q_i\right)\right)=1$.
\end{proposition}
Note: This proposition appears in~\cite{Bennett1999} for random points. Here we reproduce a proof for the distance between the flats.

\begin{proof} 
Let $\mu=\operatorname{E}\left(dist\left(P_i,Q_i\right)\right)$, $V=\frac{dist\left(P_i,Q_i\right)}{\mu}$ and $W=\frac{dist\left(R_j,Q_i\right)}{\mu}$. 
Using Lemma~\ref{lem:same} and the weak law of large numbers we get that $V\to_p 1$. Proposition~\ref{pro:diff_mean} implies that $W\to_p \Delta$. Thus, $\frac{dist(R_j,Q_i)}{dist(P_i,Q_i)}=\frac{\mu dist(R_j,Q_i)}{\mu dist(P_i,Q_i)}=\frac{W}{V}\to_p\Delta$ (see Corollary 1 at~\cite{Beyer1999}). By definition of convergence in probability for any $\epsilon>0$, $\lim\limits_{d\to\infty}\operatorname{Pr}\left(\left|\frac{dist(R_j,Q_i)}{dist(P_i,Q_i)}-\Delta\right|\leq \epsilon\right)=1$. So $\lim\limits_{d\to\infty}\operatorname{Pr}\left(\Delta-\epsilon\leq \frac{dist(R_j,Q_i)}{dist(P_i,Q_i)}\leq\Delta +\epsilon\right)=1$ which implies $\lim\limits_{d\to\infty}\operatorname{Pr}\left(dist\left(R_j,Q_j\right)\geq\left(\Delta-\epsilon\right)dist\left(P_i,Q_i\right)\right)=1.$
\qed\end{proof} 

Lemma~\ref{lem:diff} implies the correctness of our algorithms when $d\to\infty$. The following Propositions argue that for \textit{any} dimension $d$, when we dropped the midpoints with corresponding distances $\leq 2$ we eliminate at least a linear fraction $\lambda$ of the whole set. Proposition~\ref{pro:diff_mean} show that mean distance of flats is linear with respect to the distance. Next, we use this result to prove that we drop enough flats as presented in Proposition~\ref{pro:prob_drop}.

\begin{proposition}
\label{pro:diff_mean}
Let $P_{i}$ and $P_j$ be flats intersecting the $\Delta-distinct$ balls $\mathbb{B}_{c_i}$ and $\mathbb{B}_{c_j}$ (respectively), then $\operatorname{E}\left(dist\left(P_i,P_j\right)\right)$ is linear function of $\Delta$.
\end{proposition}

\begin{proof}
Denote the mean distance integral between two $k$-flats in $\mathbb{R}^d$ by $S=\operatorname{E}\left(dist(P_i,P_j)\right)$. Given that the probability density function of the flats is $\rho$, the expected value of the distance function, is given by the inner product of the functions $dist$ and $\rho$. E.g., for the $d$ dimensional lines $P(1)=\left(\alpha_1 t_1+1,\alpha_2 t_1,...,\alpha_d t_1\right)$ and $P(-1)=\left(\beta_1 t_2-1,\beta_2 t_2,...,\beta_d t_2\right)$ such that $\alpha_i,\beta_i\sim \operatorname{N}\left(\mu,\sigma\right)$, the mean distance integral is
$$
S=\int_{-\infty}^\infty dist(P(1),P(-1))\rho\left(\alpha_i,\beta_i\right) \mathrm{d}\alpha_1 \mathrm{d}\alpha_2\cdot\cdot\cdot \mathrm{d}\alpha_d \mathrm{d}\beta_1 \mathrm{d}\beta_2\cdot\cdot\cdot \mathrm{d}\beta_d
$$

Let $S_0$ be the solution of the integral $S$ for two $k$-flats intersecting the unit ball $\mathbb{B}_{c_0}^d$ and $S_1$ be the solution of $S$ for two antipodals $k$-flats tangents the surface of $\mathbb{B}_{c_0}^d$, then by Proposition~\ref{pro:S0toS1} below we get $0<S_0<S_1\leq 2$. Observing that $S_1$ is equals to any antipodal pair of flats that tangents to the surface of $\mathbb{B}_0^d$, w.l.o.g. we use the pair of flats $\left(P(-1),P(1)\right)$. Denote by $S_1$ and $S_\Delta$ the solutions of the integral $S$ for the pairs $(P(-1),P(1))$ and $(P(-\Delta),P(\Delta))$, respectively, Proposition~\ref{pro:S1toSdelta} (below) argues that the density function is invariant while the distance scaling only
in one direction, which implies that a linear change in $\Delta$ cause scaling the mean distance with $\Delta$, which complete the proof.
\qed\end{proof}


\begin{proposition}
\label{pro:prob_drop}
Given two $k-$flats $P(\Delta),P(-\Delta)\in\mathbf{P}$ passing through the points $\left(\Delta,0,...,0\right)$ and $\left(-\Delta,0,...,0\right)$ respectively. Let $X$ denote a random variable of $dist(P(\Delta),P(-\Delta))$. The probability $p$ that $dist(P(\Delta),P(-\Delta))>2$ is strictly greater than zero, i.e., $p=\operatorname{Pr}(X>2)>0$
\end{proposition}

\begin{proof}
From all the non-negative random variables $Y$ that their mean is equal to $S_1\Delta$ and $\operatorname{Pr}(Y\leq 2\Delta)=1$, we would like to find the one that maximizes the probability of $\operatorname{Pr}(Y\leq 2)$, hence, we defined $Y$ to get $2$ if $dist(P_i(\Delta),P_j(-\Delta))\leq 2$ and $2\Delta$ otherwise.  Proposition~\ref{pro:S1toSdelta} (below) implies $S_{\Delta}=S_1\Delta$, using the expectation definition we get $\operatorname{E}(Y)=2q+2\Delta q=S_{\Delta}=S_1\Delta$. Solving the equation and generating a power series expansion for $q$ we got $(1-S_1/2)+(1-S_1/2)\frac{1}{\Delta}+(1-S_1/2)\frac{1}{\Delta^2}+o(\frac{1}{\Delta^3})$. Proposition~\ref{pro:S1isSmall} below implies that $S_1<2$. Substitute this result in the power series expression yields $0<q<1$. Since $q$ is a bound on the probability to accept the flats $P_i(\Delta)$ and $P_j(-\Delta)$, it holds that the probability $p=\operatorname{Pr}(X>2)$ to drop $P_i(\Delta)$ and $P_j(-\Delta)$ is $p\geq 1-q >0$. I.e., we dropped $p$ fraction of the $\binom{n}{2}$ pairs we have got.
\qed\end{proof}

Note that Proposition~\ref{pro:prob_drop} implies that the fraction $\lambda$ of the flats we dropped is at least linear for pair of flats passing through the exact points $\left(\Delta,0,...,0\right)$ and $\left(-\Delta,0,...,0\right)$. The proof also holds for a pair of flats \textit{intersecting the ball} centered at $\left(\Delta,0,...,0\right)$ and $\left(-\Delta,0,...,0\right)$ by adding the ball's radius.

The following propositions were mentioned in the above proofs and appear here to enhance the readability of the text.
\begin{proposition}
\label{pro:2Dintersection}
Let $\ell'$ and $\ell'' $ be two random 2D lines that intersect the unit disk and $p$ be their intersecting point. With probability $1-O(1)$ the distance $r$ from $p$ to the origin is bounded.
\end{proposition}

\begin{proof}
Observing that the maximum distance between the intersecting point $p$ and the origin occur when $\ell'$ and $\ell''$ are tangents to the disk, we will consider only this case. Let $\phi\in\left[0,\pi\right]$ be the intersection angle between the two lines (see Figure~\ref{fig:2d_intersection}). When $\phi = \pi$ the two lines are joined together and $r$ equals $1$ (the disk radius). While reducing $\phi$ toward the zero angle, $r$ is increased toward infinity (i.e., when $\phi\rightarrow 0$ the lines are parallel and $r\rightarrow\infty$). For example, when $\phi = \pi/2$ then $r=\sqrt{2}$, for $\phi=\pi/3$ we get $r=2$ and generally, $r=1/\sin\frac{\phi}{2}$. Since $\phi$ is uniformly distributed over $\left[0,\pi\right]$, we get that with probability $1-\varepsilon$, the distance $r$ is $\leq r_0$, where $\varepsilon=2\arcsin\left(1/r_0\right)$. E.g., with probability $\geq 2/3$ we have $r\leq 2$.
\qed\end{proof}

\begin{figure}[!h]
\centering
\includegraphics[scale=0.3]{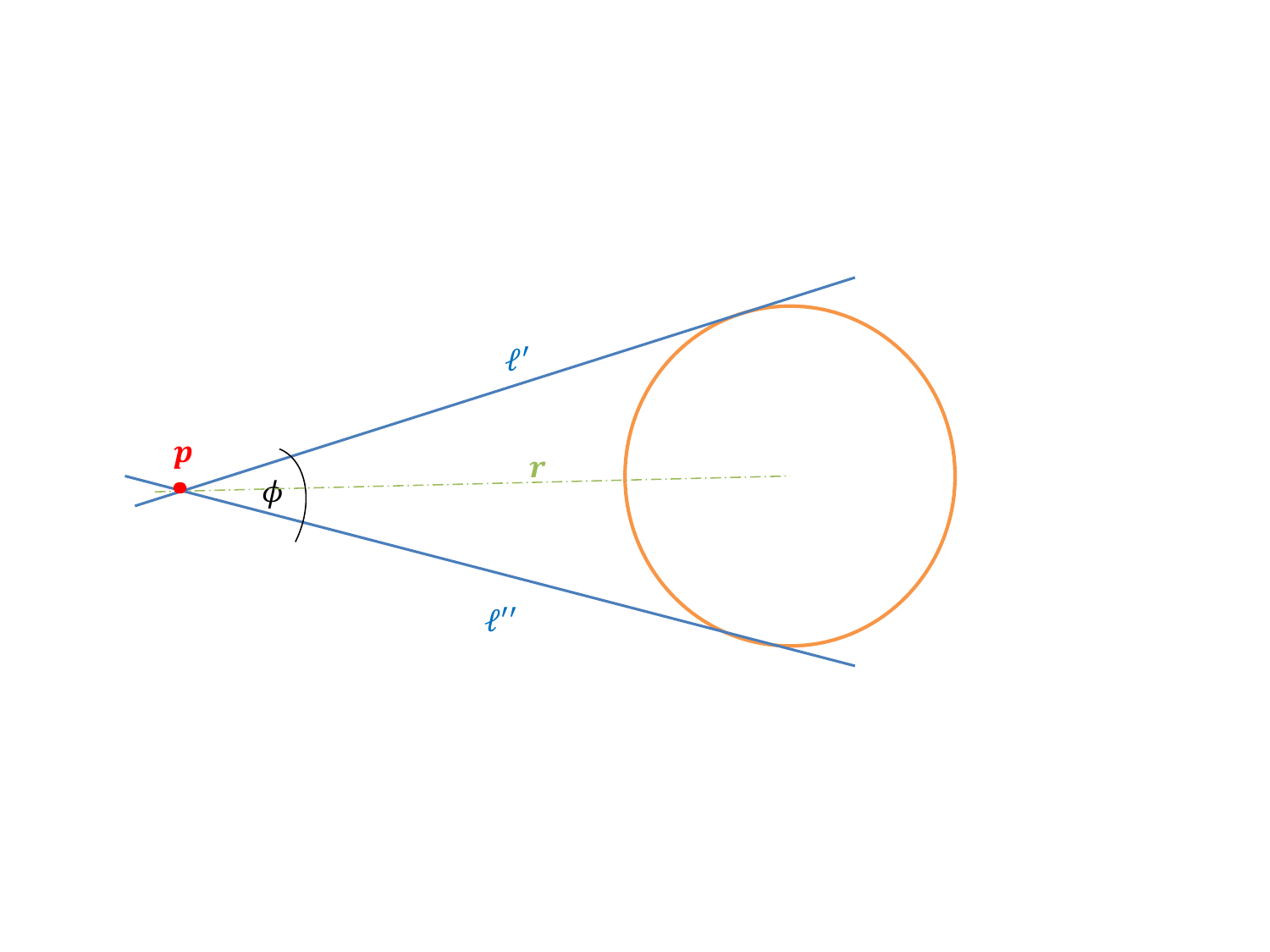}
\caption[Two dimensional pair of flats intersecting a disk]{Two dimensional pair of flats intersecting a disk. While reducing the intersection angle $\phi$ toward the zero angle, the distance $r$ between the intersecting point $p$ and the center of the disk increases towards infinity.}
\label{fig:2d_intersection}
\end{figure}

\begin{proposition}
\label{pro:Dintersection}
Let $P^{c_0}_i$ and $P^{c_0}_j$ be two $k$-flats that intersect the unit ball $\mathbb{B}^d_{c_0}$ and $p$ be their midpoint point. With probability $1-O(1)$ the distance $r$ from $p$ to the origin is bounded.
\end{proposition}

\begin{proof}
Starting with three dimensional space, the relation between the two flats can be expressed using the distance between them, their relative direction (azimuthal angle) and its relative orientation (polar angle). Fixing the orientation, the variation of the direction is described in the 2D case (see proposition~\ref{pro:2Dintersection}). When the two flats' directions cause a small distance between $p$ and the origin, changing the orientation will not increase this distance (but may decrease it). Generally, changing the flat orientation will \textit{increase} the probability that the distance from $p$ to the origin is bound. 

For general $d$, using the same idea, the flats can be represented by a spherical coordinate (i.e., the coordinates consist of a radial coordinate and $d-1$ angular coordinates), implies that the distance between the midpoint and the ball's center is bounded by a probability that increases as $d$ increases, see illustration at Figure~\ref{fig:same_ball}.    
\qed\end{proof}

\begin{figure}[!t]
\centering
\includegraphics[width=2.8in]{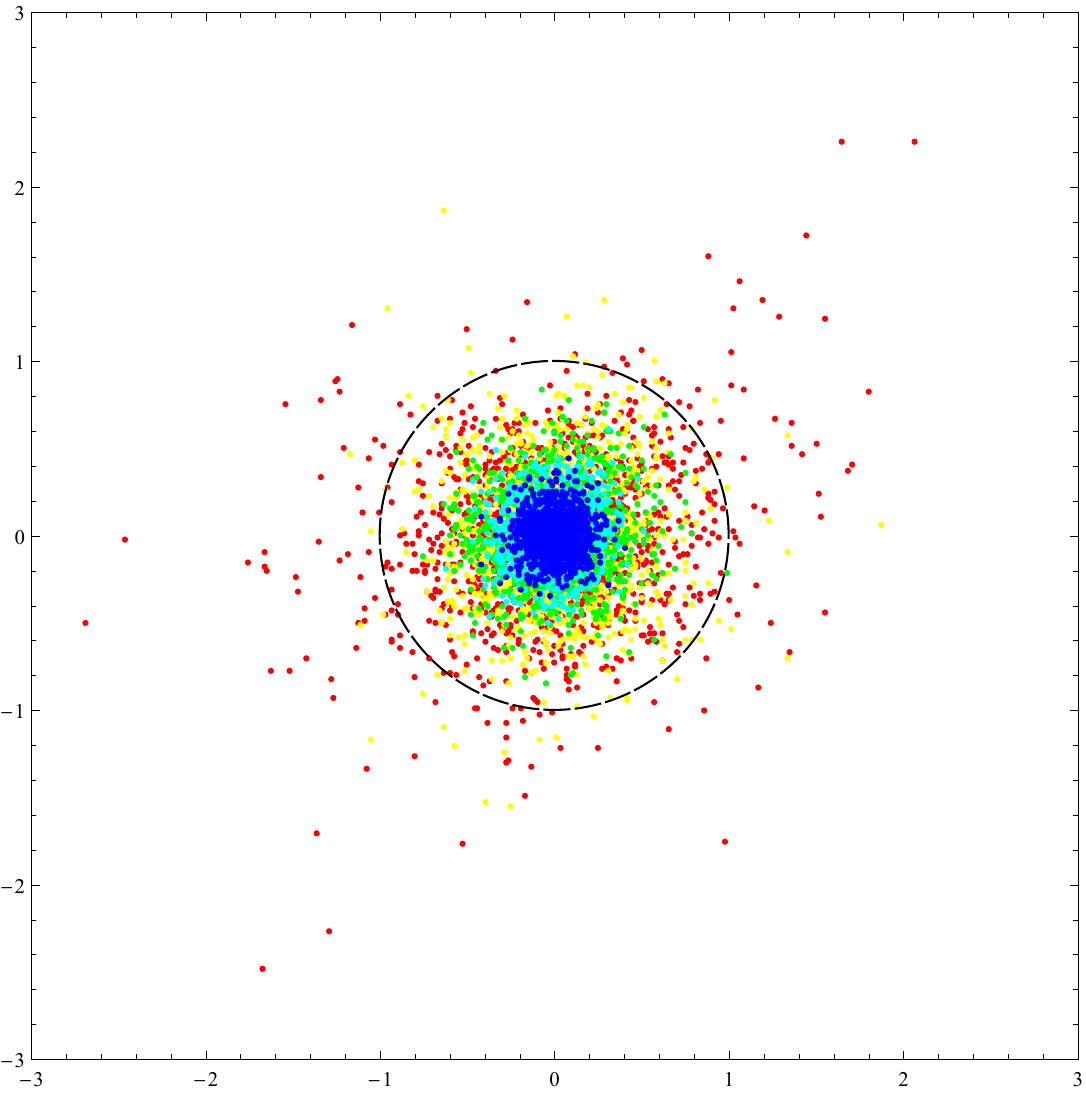}
\caption[The distance between the midpoint and the ball's center]{The distance between the midpoint and the ball's center is decreasing  as $d$ increases. For a unit ball centered at the origin (dashed line), we plot the midpoints (first two coordinates) of a set of $50$ flats with dimension $d=9$ (red dots), $d=18$ (yellow dots), $d=36$ (green dots), $d=72$ (light blue dots) and $d=144$ (blue dots). Note that midpoints from higher dimensions are plotted above those from lower ones. We can observe that most of the midpoints are located inside the unit ball and centered around the origin. Moreover, as the dimension increases, the variance of the location of the midpoints decrease.}
\label{fig:same_ball}
\end{figure}

\begin{proposition}
\label{pro:S0toS1}
Let $S_0$ and $S_1$ be the mean distance integral solutions as defined above, then $0<S_0<S_1\leq 2$.
\end{proposition}

\begin{proof}
Since the degree of the flats is $\leq d/2$ the probability that the flats intersect is $=0$ which implies that $0<S_0$. The mean distance integral $S$ contains a density function $\rho\left(\mu,\sigma\right)$ and a geometric distance $dist\left(\cdot,\cdot\right)$. The density is dependent only on the mean and the variance of the coefficients which are invariant. The distance function get its maximum value for antipodal pair, which implies that $S_0<S_1$. Finally, since the two flats are intersecting the same unit ball, the minimum distance between them is $\leq 2 (\text{ the ball radius})=2$ which implies $S_1\leq 2$ as needed. 
\qed\end{proof}

\begin{proposition}
\label{pro:S1toSdelta}
Let $S_1$ and $S_\Delta$ be the integral solutions as defined above, then $S_\Delta =S_1\Delta$.
\end{proposition}

\begin{proof}
The mean distance of a pair of $k$-flats acts symmetrically on the two pairs $(P(-1),$ $P(1))$ and $(P(-\Delta),P(\Delta))$. Namely, the density function is invariant while the distance scaling only in one direction, which implies a linear change in $\Delta$ in the solution of $S$, i.e., $S_{\Delta}=\Delta S_1$.\qed\end{proof}

\begin{proposition}
\label{pro:S1isSmall}
Let $S_1$ be the integral solutions as defined above, then $S_1 < 2$.
\end{proposition}

\begin{proof}
By its definition, $S_1$ is the mean distance between two flats passing through the points $(-1,0,...,0)$ and $(1,0,...,0)$. Fixing the flat $P(-1)$, we can observe that if $P(1)$ is intersecting the ball $\mathbb{B}_{(-1,0,...,0)}$ then $dist(P(-1),P(1))\leq 1$. Let $\alpha$ denote the probability of this event, i.e. $\alpha=\operatorname{Pr}(P(1)\cap\mathbb{B}_{(-1,0,...,0)}\neq\emptyset)$. To complete the proof, it is enough to prove that $\alpha>0$ (since $S_1
\leq\operatorname{E}(dist(P(-1),P(1)))=1*\alpha+2*(1-\alpha)$). 

Observing that $P(1)\cap\mathbb{B}_{(-1,0,...,0)}$ is a spherical cap with nonzero volume (relatively to the measure of all the flats), one can show that the probability that two random flats passing through $(-1,0,...,0)$ and $(1,0,...,0)$ has distance $\leq 1$ is greater than zero, i.e., $\theta>0$. 
\qed\end{proof}
\section{Algorithm}
\label{sec:algo}
Algorithm~\ref{algo:1} presents the pseudocode for the clustering procedure of a set of $n$ random $k$-flats in $\mathbb{R}^d$. 
\begin{algorithm}
\caption{Data clustering using flats minimum distances}
\label{algo:1}
\begin{algorithmic}[1]
\Require a set $\mathbf{P}$ of $n$ random $k-$flats in $\mathbb{R}^d$, the number of clusters $m$.
\Ensure a set $\mathbf{C}$ of $m$ clusters
\State $\mathbf{p}\gets$ \Call{FindMidpoints}{$\mathbf{P}$}
\State $\mathbf{C}\gets$ \Call{DefineClusters}{$\mathbf{p}$} \Comment{density-based clustering algorithm on the set $\mathbf{p}$, e.g., DBSCAN}
\State$M \gets n/m$ \Comment{threshold for the size of every cluster}
\For{\textbf{each} $c_k\in \mathbf{C}$}
			\If{ $size\left(c_k\right)<M$}
			\State $\mathbf{C}\gets \mathbf{C}\smallsetminus  c_k$
			\EndIf
		\EndFor
\State Return $\mathbf{C}$
\end{algorithmic}
\end{algorithm}

In the first step, we call the procedure \textit{\textsc{FindMidpoints}} to find all the midpoints between all the pairs of flats (using Proposition~\ref{pro:midpoint}) and calculate the distance between every pair (as described in Proposition~\ref{pro:dist}). We save only the midpoints whose corresponding distance is smaller than two. 

\noindent\fbox{%
 \begin{minipage}[b]{\dimexpr\linewidth-2\fboxsep-2\fboxrule\relax}
\begin{algorithmic}[1]
\Procedure{FindMidpoints}{$\mathbf{P}$}
		\State $\mathbf{p}\gets\emptyset$
		\For{\textbf{each} $\left(P_i,P_j\right)\in \mathbf{P}$}
			\State $p_{ij}\gets midpoint\left(P_i,P_j\right)$
			\State $d_{ij}\gets dist\left(P_i,P_j\right)$
			\If{ $d_{ij}\leq 2$}
				\State $\mathbf{p}\gets \mathbf{p}\cup p_{ij}$
			\EndIf
		\EndFor
		\State Return $\mathbf{p}$
	\EndProcedure
	\end{algorithmic}
	\end{minipage}%
}

Using Lemmas~\ref{lem:same} and~\ref{lem:diff} we explore the potential clusters by finding the high density midpoints' locations. We can do this by the use of the classical K-Means-like algorithms. However, since (a small) fraction of the midpoints are `noise', i.e., derived from flats intersecting different balls, we would like to ignore those midpoints. Hence, we recommend using an algorithm that has specialized noise handling such as DBSCAN (Density-Based Spatial Clustering of Applications with Noise) as describe in~\cite{Ester1996}.

Next, we use our assumption (see Section~\ref{sec:pre}) about the equal size of the different clusters and define a threshold $M$ which equals to $n/m$. Now we eliminate all the clusters that their density is low (as defined by the threshold $M$). 

Note that the algorithm outputs a set of clusters $C_{res}=\left\{c_k\right\}$ such that each cluster contains midpoints $c_k=\left\{p_{ij}\right\}$ that indicate that the flats $P_i,P_j$ are in the cluster $c_k$.

\section{Experimental Studies of $k$-Flat Clustering}
\label{sec:exp}
As part of the main theorem proof, Lemma~\ref{lem:diff} tells us what happens when we take the dimensionality to infinity. In practice, it is interesting to know at what dimensionality we anticipate that the flat pairwise projection to midpoints implies good separation to different clusters. In other words, Lemma~\ref{lem:diff} describes some convergence, but does not indicate the convergence rate. We addressed this issue through empirical studies. 

We ran the following experiments using synthetic data set, producing the flats' inputs with normally distributed location and direction, as described in the model assumption. Without loss of generality we choose the balls' center to be $c_1=\left(-100,...,0\right)$ and $c_2=\left(100,0,...,0\right)$ and $k$ (the flats dimension) equals to $d/3$.

Each cluster contain 10 random flats, all together we have 20 random flats. Our algorithm computes the midpoint for all pairs of flats; all together we have 190 center points. See Figure~\ref{fig:midPoints_diff} which shows four different experiments, each done for different dimensions. Those center points are divided into three groups: the first 45 are shown as a red dot close to the center $c_1$. Furthermore, they are close to one another so that the eye cannot distinguish between them. The second group is also comprised of 45 points, shown as a red dot to the right, close to $c_2$. The third group has 100 points, centered around 0 point. Those points are shown in black, with a distance of $>2$. This means that the algorithm rejects all the points in the third group, as was anticipated. The four images illustrate how the variance is decreasing, while increasing the dimension. This illustrates that our algorithm preforms better for higher dimensions. 

\begin{figure*}[!t]
	\centering
	\begin{tabular}{cc}
		\includegraphics[scale=0.5,clip=true]{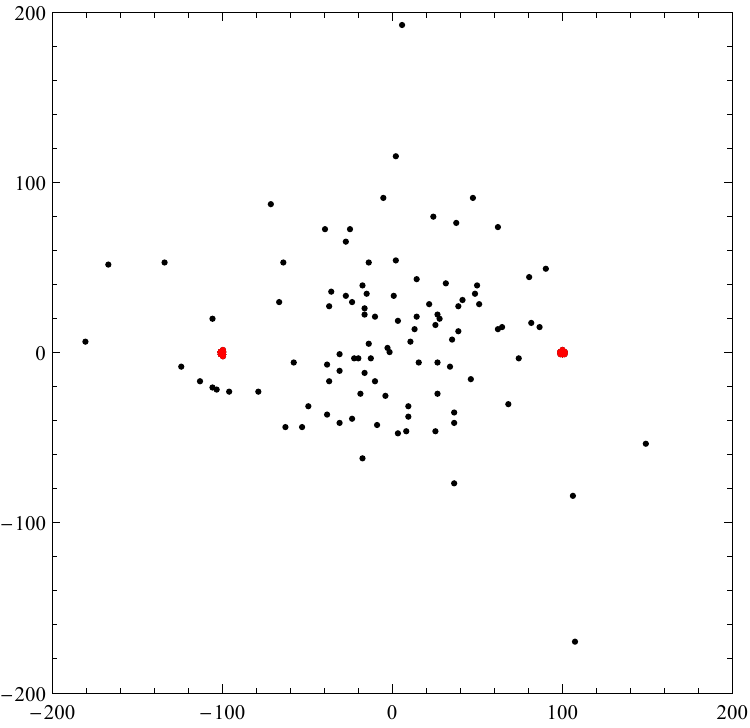} &
		\includegraphics[scale=0.5,clip=true]{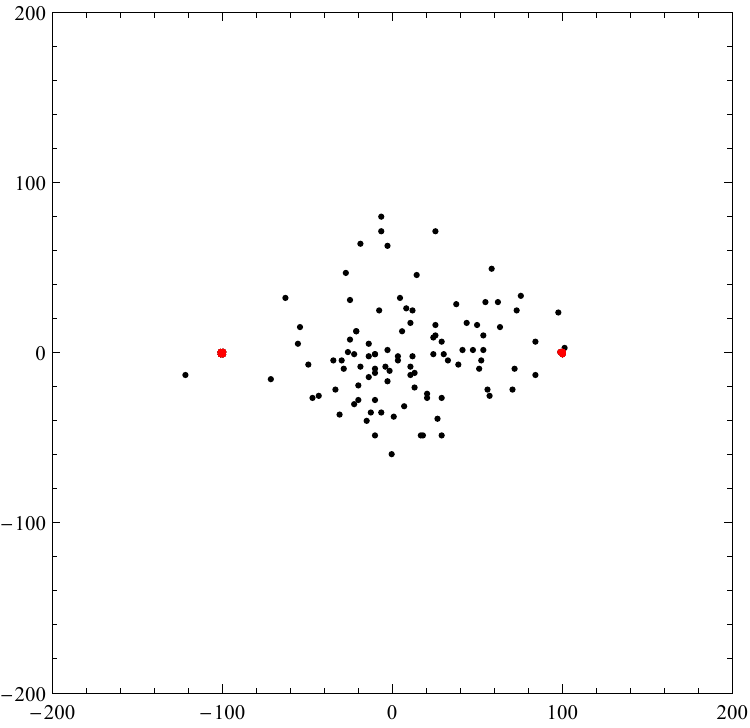} \\
		$d=9,k=3$ & $d=30,k=10$ \\
		\includegraphics[scale=0.5,clip=true]{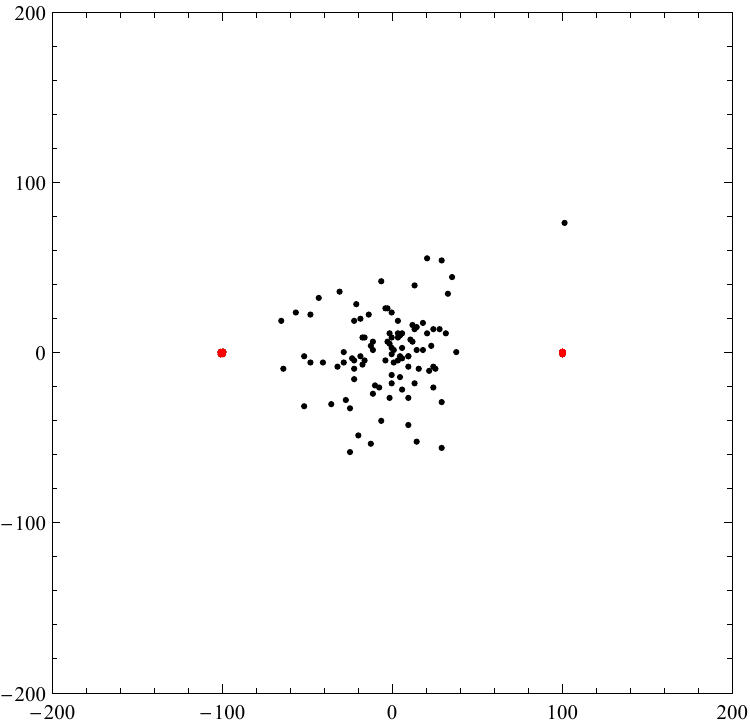} &
		\includegraphics[scale=0.5,clip=true]{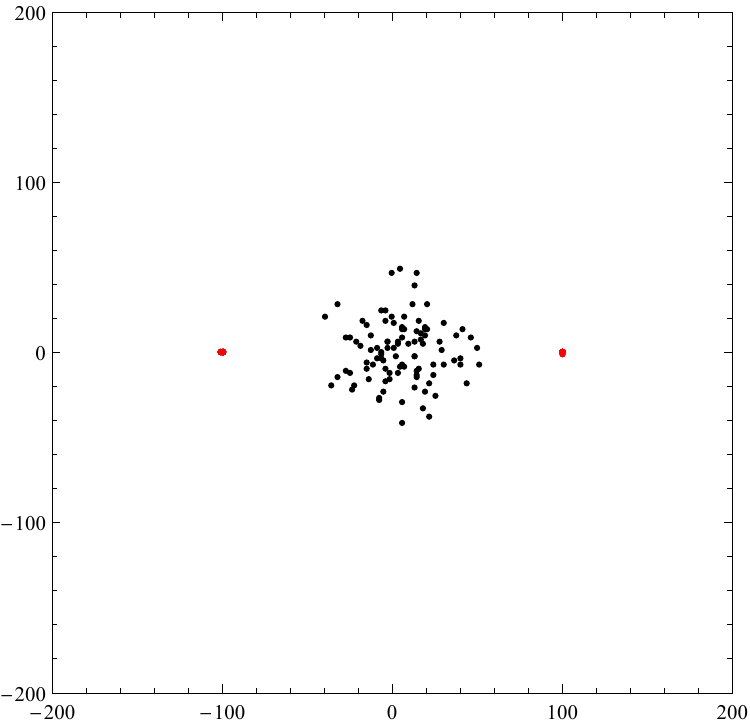} \\
		 $d=60,k=20$ & $d=90,k=30$\\
			\end{tabular}
\caption[Eliminate the irrelevant midpoints]
{\captionsize Given two sets of flats from two clusters located at $\mathbb{B}_{(-100,0,...0)}$ and $\mathbb{B}_{(100,0,...0)}$, the black points are the midpoints of all the pairs and the red points indicate those who left after eliminate flats that their corresponding distance is greater than $2$.
}
	\label{fig:midPoints_diff}
\end{figure*}

\section{Clustering with different group sizes}
\label{sec:diffGroupSize}
Algorithm~\ref{algo:1} we presented above works fine for a set of $m$ clusters, for which each one of them contains the \textit{same} number of flats. We need this assumption to ensure that we will not identify `noisy' midpoints (i.e., midpoints derived from flats intersecting different balls) as a true cluster. In this section we would like to relax the equal size clusters assumption. 

For sufficiently large dimension $d$ we do not need the assumption concerning the equal size of the clusters since we show in Lemma~\ref{lem:diff} that when $d\to\infty$ we will drop all the noisy midpoints (since  w.h.p their distance is larger than two). For a general dimension $d$, we show we drop a fraction $\lambda$ of the noise and argue (as in Proposition~\ref{pro:prob_drop}) that this fraction is at least linear. Hence, instead of assuming equal size clusters we can assume that the difference between the clusters is at most $\lambda$, we will call it \textit{$\lambda$-close size} clusters. Moreover, when the data contains also some very big clusters (i.e., of size greater than the joint number of all the rest) we suggest peeling off these clusters and continuing recursively as described in Algorithm~\ref{algo:2}.

Note that the other model's assumptions hold (see Section~\ref{sec:pre}), i.e., for the given set of $k$-flats $\mathbf{P}$:
\begin{itemize}
\item Two independent random flats are in general position with probability one.
\item $1\leq k\leq \left\lfloor d/2\right\rfloor$.
\item The (unknown) balls $\mathbb{B}^d_{c_1},...,\mathbb{B}^d_{c_m}$ are $\Delta$-\textit{distinct} with probability one.
\item $\mathbf{P}$ is a superset of $m$ groups $\mathbf{P}=\left\{P_1,...,P_m\right\}$, such that every group $P_i\in\mathbf{P}$ contains flats that intersect the ball $\mathbb{B}^d_{c_i}$. Moreover, each flat $P\in P_i$ have \textit{normally distributed} location and direction at the ball $\mathbb{B}^d_{c_i}$.
\end{itemize}

Given a set $\mathbf{P}$ of $n$ flats and the number of clusters $m$, we find the midpoints set using \textsc{FindMidpoints} (similar to Algorithm~\ref{algo:1}). Next, we call the recursive procedure \textsc{RecClustering} that define the potential clusters (\textsc{DefineClusters} as described in Algorithm~\ref{algo:1} above) and check if there exists a big cluster $c_1$ such that its size is greater than $n/2$ (where $n$ denote the number of the remaining flats). If such cluster were explored, we eliminate all the flats belong to it and recursively call the procedure again. Otherwise, we assume that all the clusters are $\lambda$-close-size so the algorithm recognized them in the same way it does in Algorithm~\ref{algo:1}. 

\begin{algorithm}
\caption{Clustering with different sets size}
\label{algo:2}
\begin{algorithmic}[1]
	\Require a set $\mathbf{P}$ of n different $k-$flats in $\mathbb{R}^d$, the number of clusters $m$, the fraction $\lambda$.
	\Ensure a set $\mathbf{C}_{res}$ of clusters
	\State $\mathbf{p}\gets$ \Call{FindMidpoints}{$\mathbf{P}$}
	\State $\mathbf{C}_{res}\gets$ \Call{RecClustering}{$\mathbf{p},Emptyset,m,n$}
	\State Return $\mathbf{C}_{res}$
\end{algorithmic}
\end{algorithm}

\noindent\fbox{%
\begin{minipage}{\dimexpr\linewidth-2\fboxsep-2\fboxrule\relax}
\begin{algorithmic}[1]
	\Procedure{RecClustering}{$\mathbf{p},\mathbf{C}_{res},m,n,\lambda$}
		\State $\mathbf{C}\gets$ \textsc{DefineClusters}$\left(\mathbf{p}\right)$
		\State $c_1\gets \displaystyle\max_{c_k}\left\{\mathbf{C}\right\}$
		\If {$size\left(c_1\right)> n/2$}
				\State $\mathbf{C}_{res}\gets \mathbf{C}_{res}\cup c_1$
				\State $\mathbf{p}\gets \mathbf{p}\smallsetminus \left\{p_{ij}:p_{ik}\in c' \text{ \texttt{or} } p_{kj}\in c' \right\}$
				\State $n\gets n-size\left(c_1\right)$
				\State $\mathbf{C}_{res}\gets$\Call{RecClustering}{$\mathbf{p},\mathbf{C}_{res},m-1,n,\lambda$}
		\Else
			\For{\textbf{each} $c_k\in \mathbf{C}$}
				\If{ $\textsc{size}\left(c_k\right)<\lambda n/m$}
				\State $\mathbf{C}\gets \mathbf{C}\smallsetminus  c_k$
				\EndIf
			\EndFor
		\EndIf
		\State Return $\mathbf{C}_{res}$
	\EndProcedure
\end{algorithmic}
\end{minipage}%
}

Proposition~\ref{pro:ClusteringDiffSize} argue the correctness of the above approach.

\begin{proposition}
\label{pro:ClusteringDiffSize}
Given the set $\mathbf{P}$ of $n$ different $k$-flats in $\mathbb{R}^d$. Let $s_1$ denote the largest set of flats that intersect the same unit ball. When the size of $s_1$ is greater than $n/2$ Algorithm~\ref{algo:2} identify correctly $s_1$'s flats as a true cluster.
\end{proposition}
\begin{proof}
Let $n_1$ be the number of flats in $s_1$. Algorithm~\ref{algo:2} find a cluster $c_1$ contains $\alpha\binom{n_1}{2}$ midpoints that all of them produced by $s_1$ flats' and also a cluster $c_2$ contains $\beta\binom{n_1(n-n_1)}{2}$ midpoints produced by mixture of flats from $s_1$ (i.e., $c_2$ contains the midpoints $\left\{p_{ij}:P_i\in s_1, P_j\notin s_1\right\}$). Lemmas~\ref{lem:same} and~\ref{lem:diff} imply that w.h.p. for a sufficiently large dimension, $\alpha > \beta$. Using the assumption about the size of $s_1$ we get that $\binom{n_1}{2}>\binom{n_1(n-n_1)}{2}$, hence, $c_1$ is the largest size cluster identify correctly (in Line $4$) and the wrong cluster $c_2$ will be dropped (see Line $6$).\qed\end{proof}

\section{Sublinear and distributed algorithms}
\label{sec:subAndDistribute}
Given a set of size $n$ with $k$-flats in $\mathbb{R}^d$, the algorithms we presented above find the distance and the midpoints of every pair in $O\left((kd)^\omega\right)$ time (where $\omega$ is the matrix multiplication complexity), using the least squares method. By doing this for $\binom{n}{2}$ pairs, we presented a $poly(n,k,d)$ running time algorithm. One can achieve polylogarithmic time using sampling. Namely, instead of running the algorithms with the whole $n$ flats set, we apply the algorithms with a sample of $\log n$ flats that were picked uniformly at random. The main reason we can use sampling is due to our assumption about the normally distributed data, namely, that the given set of flats $\mathbf{P}$ is a superset of $m$ groups $\mathbf{P}=\left\{P_1,...,P_m\right\}$, such that every group $P_i\in\mathbf{P}$ contains flats $P\in P_i$ that have \textit{normally distributed} location and direction at the ball $\mathbb{B}^d_{c_i}$. 

Another way to improve efficiency is to execute the algorithm in a distributed fashion. We describe the distributed algorithm in the procedure \textsc{DistributedFindMidpoints} which replace the procedure \textsc{FindMidpoints} of Algorithm~\ref{algo:2}. Given a set of $q$ processors such that each one of them has an access to the whole set of flats $\mathbf{P}$, every processor randomly picks a pair of flats and calculate their midpoints. If the distance between the pair is less than two, the processor saves the midpoint in shared memory (stored in the set $\mathbf{p}$). The processors continue this procedure until enough midpoints were collected as defined by the threshold $\tau$. The clustering process can be done by any of the processors as described in Algorithm~\ref{algo:2}. The correctness of this algorithm follows the birthday paradox that promises that with high probability there will not be an overlap between the processors due to the small fraction of the sampling. 

\noindent\fbox{%
\begin{minipage}{\dimexpr\linewidth-2\fboxsep-2\fboxrule\relax}
\begin{algorithmic}[1]
	\Procedure{DistributedFindMidpoints}{$\mathbf{P},\mathbf{p},\tau$}
			\While{\textsc{Size}$\left(\mathbf{p}\right)< \tau$}
				\State randomly pick a pair of flats $\left(P_i,P_j\right)\in \mathbf{P}$
				\State $p_{ij}\gets midpoint\left(P_i,P_j\right)$
				\State $d_{ij}\gets dist\left(P_i,P_j\right)$
				\If{ $d_{ij}\leq 2$}
					\State $\mathbf{p}\gets \mathbf{p}\cup p_{ij}$
				\EndIf
			\EndWhile
	\EndProcedure
\end{algorithmic}
\end{minipage}%
}   

Note, we can also replace the sequential \textsc{DefineClusters} procedure in Algorithm~\ref{algo:2} with a distributed one, namely, for DBSCAN algorithm there also exists a distributed version (see e.g.,~\cite{Januzaj2004}).
\section{Discussion}
\label{sec:diss}
The probability of flats' intersections appear at different settings in~\cite{Kendall1963} and~\cite{Santalo2004}. Using polar representation~\cite{Kendall1963} measure the probability that $d$ $k$-flats going through a ball, will intersect each other inside the ball. E.g., for $d=2$ and $k=1$, random lines intersecting a disk will intersect each other inside the disk with probability $1/2$ and for $d=3$ and $k=2$, three planes that intersecting a convex region $K$ will have their common point inside $K$ with probability $\pi^2/48$. These results are generalized in~\cite{Santalo2004} for $n$ randomly chosen subspaces $f_{k_i}$ ($i=1,2,...,n$) in $\mathbb{E}^d$, such that $k_1+k_2+...+k_n\geq (n-1)d$, that intersect a convex body $K$. Formalized the probability that $f_{k_1}\cap f_{k_2}\cap ... \cap f_{k_n}\cap K \neq \emptyset$ using the integral:
$
\int\displaylimits_{f_{k_1}\cap f_{k_2}\cap...\cap f_{k_n}\cap K\neq\emptyset} df_{k_1}\wedge f_{k_2}\wedge...\wedge df_{k_n}
$
\cite{Santalo2004} (13.39),(14.2) show that the measure of all $k$-flats $f_k$ that intersect a convex body $K$ in $\mathbb{E}^d$ is $\frac{O_{d-1}\cdot\cdot\cdot O_{d-k-1}}{(d-k)O_{k-1}\cdot\cdot\cdot O_0}$ (where $O_d$ denotes the surface area of the $d$-dimensional unit sphere). Another related result one can extract from~\cite{Santalo2004} work is the probability of a hyperplane $L_{d-1}$ and a line $L_1$ that intersect a ball to have an intersection inside the ball, which equals $1/d$. A detailed description of the above results appear in the following Appendix.

The studies of~\cite{Kendall1963} and~\cite{Santalo2004} consist on polar representation of the data (i.e., the coordinates consisting of a radial coordinate and $d-1$ angular coordinates) which gives high weight to the first coordinate while the weight of the following coordinates decrease (since the coefficients are multiples of sine and cosine). Hence, our assumption on normal distribution over the different coordinates is not fulfilled.

Another direction we examine was to find the unknown balls (that defined the clusters) using the intersection of orthogonal flats. The justification of focus on orthogonal sets comes from the ``curse of dimensionality'' phenomenon~\cite{Bellman1961}, were one manifestation of the ``curse'' is that in high dimensions, almost any two vectors are almost orthogonal~\cite{Rajaraman2011}. Starting with the two dimensional case, we generate a random set of flats intersecting disjoint balls and picked all the almost orthogonal pairs, namely, pairs that their intersecting angle is in $\left[\pi/2 \pm \varepsilon\right]$ (where $\varepsilon$ depend on $n$- the number of flats). Interestingly, as presenting in Figure~\ref{fig:orthogonal}, we can describe the pairwise intersection by distinguish two sets: those passing through the same ball, and those arise from different balls. The first set concentrated around the original balls center, while the second set create a structural figure, corresponding to the relative geometrical positioning of the original balls. This geometric structure might help in defining the origin unit balls, but the exact definition of $\varepsilon$ and the generalization to higher dimensions should be examined in further research. 
\begin{figure}[!t]
\centering
\includegraphics[scale=0.5]{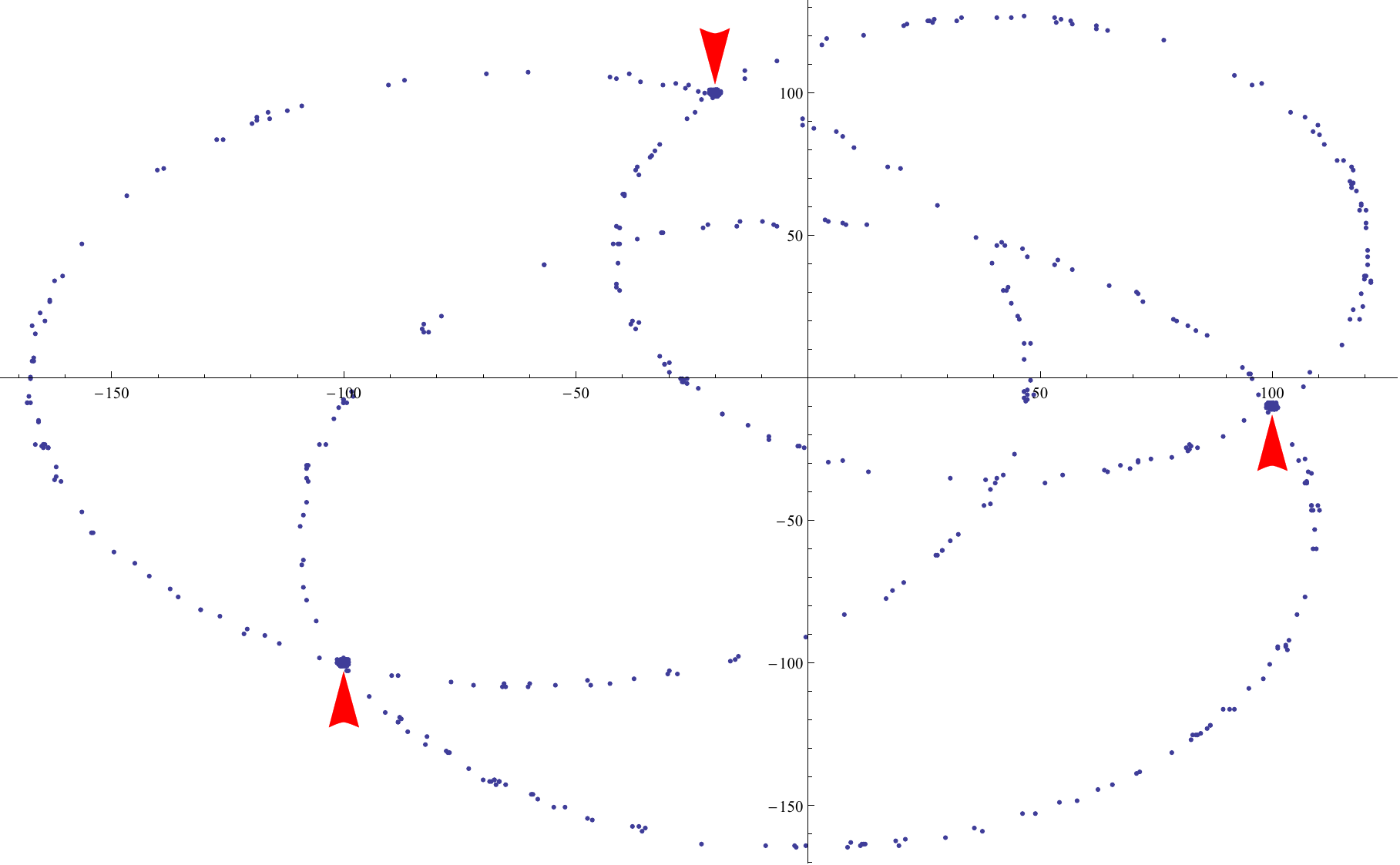}
\caption[Almost orthogonal flats pairwise intersection]{Pairwise intersection of two dimensional almost orthogonal flats from three disjoint balls. Given a set of $1000$ random lines passing through the three unit balls $\mathbb{B}^2_{\left(-100,-100\right)},\mathbb{B}^2_{\left(100,-10\right)},\mathbb{B}^2_{\left(-20,100\right)}$, we plot the intersecting point of every almost orthogonal pair, i.e., pair of lines that their intersecting angle is in $\left[\pi/2\pm \varepsilon\right]$, where $\varepsilon=0.001$. One can identify the original three centers (see red arrows) by the points concentrating around their regions. In addition, the rest of the points located on the boundary of the circles connecting the centers.}
\label{fig:orthogonal}
\end{figure}

Canas et al.~\cite{Canas2012} study the problem of estimating a manifold from random $k$-flats. Given collections of $k$-flats in $\mathbb{R}^d$
their (Lloyd-type) algorithm, analogously to $k$-means, aims at finding the set of $k$-flats that minimizes an empirical reconstruction over the whole collection. Although they also deal with the input of $k$-flat, their framework and goals are different from ours, and specifically, impractical for the clustering task.

The distance between pairs of $k$-flats as well as measuring the geometry of the midpoints was studied in~\cite{Schulte2014} and generalized at~\cite{Hug2015}. Although these papers consider the probabilistic aspects of the flats intersections, as we do, they focus only on stationary processes (such as Poisson processes) that do not satisfy the uniform and Gaussian distributions that we assume here.

As mentioned above, Lee $\&$ Schulman~\cite{Lee2013} presented algorithms and hardness results for clustering general $k$-flats in $\mathbb{R}^d$. After proving that the exponential dependence on $k$ (the internal dimension of the flat) and $m$ (the number of clusters) is inevitable they suggest an algorithm which runs in time exponential in $k$ and $m$ but is linear in $n$ and $d$. Their theoretical results are based on the assumption that the flats are axis-parallel. Our model overcomes their exponential bounds due to the randomized assumption.

\section{Conclusion}
\label{sec:con}
The analysis of incomplete data is one of the major challenges in the scope of big data. Typically, data objects are represented by points in $\mathbb{R}^d$, we suggest that the incomplete data is corresponding to affine subspaces. With this motivation we study the problem of clustering $k$-flats, where two objects are similar when the Euclidean distance between them is small. The study presented a simple clustering algorithm for $k$-flats in $\mathbb{R}^d$, as well as studied the probability of pair-wise intersection of these objects. 

The key idea of our algorithm is to formulate the pairs of flats as midpoints, which preserves distance features. This way, the geometric location of midpoints that arise from the same cluster, identify the center of the cluster with high probability (as shown in Lemma~\ref{lem:same}). Moreover, we also show (Lemma~\ref{lem:diff}) that when the dimension $d$ is big enough, the corresponding distance of flats that arise from different clusters approach the mean distance of the cluster's center. Using this, we can eliminate the irrelevant midpoints with high probability. 

For low dimensions, we did not identify the exact probability that we dropped all the irrelevant flats (i.e., those that arise from different clusters), however, we do show that we eliminate a linear fraction $\lambda$ of those irrelevant flats. In addition, using experimental results, we support our claim that the algorithm works well in low dimensions as well.

Finally, we show we can achieve a polylogarithmic running time using sampling; we also illustrate a distributed version of the algorithm.

Future work includes proving that $\lambda\to 1$ for a general dimension $d$ (we show this only for $d\to\infty$). Obtaining this result will make our algorithm practical to any mixture size of clusters.


\bibliographystyle{abbrv}
\bibliography{references}
\section{Appendix: The probability of flats intersection}
\label{apdx:chap4}

The probability of flats intersection appear at different settings in~\cite{Kendall1963} and~\cite{Santalo2004}. Due to ``Bertrand Paradox" (see explanation at~\cite{Kendall1963} Introduction) the most natural coordinates to use for the description of flats in the $d$ Euclidean space,$\mathbb{E}^d$, are the polar coordinates. Starting with the two dimensional space, a line on the plane is determined by its distance $p$ from the origin and the angle $\theta$ of the normal with the $x$ axis. The equation of the line is 
$$
x\cos \theta + y\sin \theta - p =0
$$
The measure of the set of all lines $L_1$ intersecting a bounded convex set $K$ is~\cite{Santalo2004}(3.12)
\begin{eqnarray}
m(L_1,L_1\cap K \neq \emptyset)= \int\displaylimits_{L_1\cap K\neq \emptyset}p d\theta = L=2\pi
\end{eqnarray}
where $L$ is the length of $\partial K$ (perimeter of $K$, for the disk its equals to $2\pi$).

The measure for two random chords of $K$ to intersect inside $K$ is~\cite{Kendall1963}(3.9)
\begin{eqnarray}
\int x dp d\theta=2\pi A
\end{eqnarray}
where $A$ is the area of $K$. 

Since the measure of each line that intersecting $K$ is $L$ and they are taken as independent, the appropriate measure for pair of lines is $L^2$. This implies that the probability for random lines intersecting a disk to intersect each other inside the disk is 
\begin{eqnarray}
p=\frac{2\pi A}{m(L_1,L_1\cap K \neq \emptyset)^2}=\frac{2\pi \pi}{(2\pi)^2}=\frac{1}{2}.
\end{eqnarray}
This result is fixed while changing the radius of the disk. (Note: the probability that \textit{all} the intersection points lie inside $K$ is $< \frac{n!}{\left(2n\right)!}\left(\frac{bL}{2}\right)^n$ (where $b$ is the maximal value of the curvature of $\partial K$), see~\cite{Sulanke1965}).

\subsection*{Random planes in the 3D space.}
At the three dimensional space, the original set of flats might appear as lines or planes. We continue here by assuming a set of planes (the probability of lines and mixture of lines and planes appear as part of the general case). The appropriate definition for planes is given by the polar equation~\cite{Kendall1963}(4.1):
\begin{eqnarray}
x\sin \theta \cos \phi + y \sin \theta \sin \phi + z \cos \theta = p
\end{eqnarray}
and the element of measure is
\begin{eqnarray}
\sin \theta d\theta d\phi dp
\end{eqnarray}
where $0\leq \theta\leq \pi$ and $0\leq\phi\leq 2\pi$. 

For calculating the probability that three planes intersecting $K$ have their common point inside $K$, we need the value of the integral
$$
m(L_{2_i},L_{2_j},L_{2_\ell};L_{2_i}\cap L_{2_j}\cap L_{2_\ell}\cap K\neq\emptyset)=\int\displaylimits_{L_{2_i}\cap L_{2_j}\cap L_{2_\ell}\cap K\neq\emptyset} dL_{2_i}\wedge L_{2_j}\wedge dL_{2_\ell}
$$
Like in the planar case, first we extract the measure $M$ of all planes $L_2$ that meeting a convex region $K$, which is
\begin{eqnarray}
m(L_2;L_2\cap K \neq \emptyset) = \int\displaylimits_{L_2\cap K\neq\emptyset}dL_2 = 4\pi
\end{eqnarray}
The proof of this is given by Minkowski~\cite{Kendall1963} (see Section 4.7). 

Now we calculate the measure that three planes $L_{2_i},L_{2_j},L_{2_\ell}$ that meet $K$, intersect each other inside $K$. Suppose two of the planes intersecting inside $K$, denote the intersection length by $L$, i.e., $L_{2_i}\cap L_{2_j}\cap K=L$. The measure of all planes which intersect $L$ is $\pi L$~\cite{Kendall1963}(4.3).
The integral of $L$ over all positions of one of these planes is $\frac{1}{2}\pi^2 A$~\cite{Kendall1963}(4.7),where $A$ is the area of intersection of the other plane. In turn, the integral of $A$, the area of intersection over all intersecting planes is $2\pi V$. Hence, the measure of all such triples is $\pi^4V$, and the required probability is
\begin{eqnarray}
p=\frac{\pi^4 V}{m(L_2;L_2\cap K \neq \emptyset)^3}=\frac{\pi^4 \frac{4}{3} \pi}{(4\pi)^3}=\pi^2/48
\end{eqnarray}
Changing the radius of the sphere inversely proportional to the probability $p$, since the radius $R$ at the numerator is power of $3$ (part of the volume $V=4/3\pi R^3$), but at the denominator $R$ have power of $6$.

\subsection*{Random $r$-planes in $\mathbb{E}^d$.}
Given $n$ randomly chosen subspaces $L_{r_i}$ ($i=1,2,...,n$), such that $r_1+r_2+...+r_n\geq (n-1)d$, that intersect d-dimensional ball $\mathbb{B}^d$. We would like to find the probability that $L_{r_1}\cap L_{r_2}\cap ... \cap L_{r_n}\cap \mathbb{B}^d \neq \emptyset$. Namely, to solve the integral:
$$
m(L_{r_1},L_{r_2},...,L_{r_n};L_{r_1}\cap L_{r_2}\cap...\cap L_{r_n}\cap K\neq\emptyset)=\int\displaylimits_{L_{r_1}\cap...\cap L_{r_n}\cap K\neq\emptyset} dL_{r_1}\wedge L_{r_2}\wedge...\wedge dL_{r_n}
$$
Mimic the way we use in the low dimensions, we have to calculate the measure of all $r$-planes $L_r$ that intersect $\mathbb{B}^d$, and also to find out the measure that all the intersecting of set is interior to $\mathbb{B}^d$.  
Let $O_d$ denote the surface area of the d-dimensional unit sphere and $\kappa_d$ denote the volume of the n-dimensional unit ball. Their values are: 
\begin{eqnarray}
O_d=\frac{2\pi^{(d+1)/2}}{\Gamma((d+1)/2)} & ; & \kappa_d=\frac{O_{d-1}}{d}=\frac{2\pi^{d/2}}{d\Gamma(d/2)}
\end{eqnarray}
where $\Gamma$ is the Gamma function. For instance, $O_0=2,O_1=2\pi,O_2=4\pi,O_3=2\pi^2$.

The measure of \textit{all} $r$-planes $L_r$ that intersect $\mathbb{B}^d$ appear at~\cite{Santalo2004} (13.39),(14.2):
\begin{eqnarray}
m(L_r,L_r\cap \mathbb{B}^d \neq \emptyset)= \frac{O_{d-1}\cdot\cdot\cdot O_{d-r-1}}{(d-r)O_{r-1}\cdot\cdot\cdot O_0}
\end{eqnarray}
Santalo~\cite{Santalo2004} also show that 
\begin{eqnarray}
p(L_p\cap L_q\cap \mathbb{B}^d\neq\emptyset ; p + q = d) = \frac{p!q!O_{d-1}\kappa_d}{\left(d-1\right)!O_{p-1}O_{q-1}}\nonumber\\
p(L_p\cap L_q\cap \mathbb{B}^d\neq\emptyset ; p + q > d) = \frac{2\left(p-1\right)!\left(q-1\right)!O_{2d-p-q+1}}{\left(p+q-d-1\right)!\left(d-1\right)!O_{d-p+1}O_{d-q+1}}
\end{eqnarray}
This result will help us while we use the intersection of pairs of flats to locate the Ball's center.

Another result we can extract from~\cite{Santalo2004} work is the probability of a hyperplane $L_{n-1}$ and a line $L_1$ that intersect a ball having an intersection inside the ball:
\begin{eqnarray}
p(L_{1},L_{n-1};L_{1}\cap L_{n-1}\cap \mathbb{B}^d\neq\emptyset) = 1/n
\end{eqnarray}
This result can be useful for records such that at one of the record all exclude one of the coordinate is missing ($L_{n-1}$), and for the second record, only one coordinate is missing ($L_1$).


\end{document}